\newcommand{\bra}[1]{\left\langle #1\right|}
\newcommand{\ket}[1]{\left| #1\right\rangle}
\newcommand{\braket}[2]{\left\langle #1|#2\right\rangle}
\newtheorem*{theorem}{Theorem}
\begin{document}

\title{Verification of quantum discord}

\author{Saleh Rahimi-Keshari$^{1}$, Carlton M. Caves$^{2,3}$, and Timothy C. Ralph$^{1}$}
\affiliation{$^{1}$Centre for Quantum Computation and Communication Technology,
School of Mathematics and Physics,
University of Queensland, St Lucia, Queensland 4072, Australia \\
$^{2}$Center for Quantum Information and Control, University of New Mexico,
MSC07-4220, Albuquerque, New Mexico 87131-0001, USA\\
$^{3}$Centre for Engineered Quantum Systems, School of Mathematics and Physics,
University of Queensland, St Lucia, Queensland 4072, Australia}

\begin{abstract}
We introduce a measurement-based method for verifying quantum discord of any bipartite quantum system. We show that by performing an informationally complete POVM (IC-POVM) on one subsystem and checking the commutativity of the conditional states of the other subsystem, quantum discord from the second subsystem to the first can be verified. This is an improvement upon previous methods, which enables us to efficiently apply our method to continuous-variable systems, as IC-POVMs are readily available from homodyne or heterodyne measurements.  We show that quantum discord for Gaussian states can be verified by checking whether the peaks of the conditional Wigner functions corresponding to two different outcomes of heterodyne measurement coincide at the same point in the phase space. Using this method, we also prove that the only Gaussian states with zero discord are product states; hence, Gaussian states with Gaussian discord have nonzero quantum discord.
\end{abstract}

\pacs{42.50.Dv, 03.65.Ta, 03.65.Ud}

\maketitle

\section{Introduction}
\label{Intro}

Quantum correlations play a central role as a resource in quantum information processing and quantum communication tasks.  Traditionally, entanglement was thought of as the unique form of quantum correlation and the reason why quantum computers can outperform classical computers.
Yet there are tasks that are believed to be exponentially hard classically, which can be done efficiently using quantum computational models with little or no entanglement~\cite{KL98,DFCaves05}.  Quantum discord was introduced as a more general measure of quantum correlation for bipartite systems, with no classical analogue~\cite{Zurek01}, and it was suggested as a resource for certain quantum computation models~\cite{DSCaves08}, quantum state merging~\cite{madhok11,cavalcanti}, and for encoding information onto a quantum state~\cite{gu}.  Discord has been generalized to continuous-variable systems to study quantum correlations in Gaussian states~\cite{Datta10,Paris10} and certain nonGaussian states~\cite{TMAdessoK12}.

Recently, schemes have been proposed to test for nonvanishing quantum discord of discrete-variable quantum states~\cite{terno2010,rahimi2010,BC10,DVBrukner10,modi2011,ZYCO11,sarandy2011,long2011,GAdesso12,serra2012}, and some of these have been implemented in nuclear-magnetic-resonance systems~\cite{Serra11,Laflamme11} and in an optical system~\cite{walborn2012}.
Of particular practical interest, however, is a general method for detecting nonvanishing discord in the joint state of both discrete and continuous-variable systems.

In this paper we introduce a measurement-based method for verifying quantum discord of any bipartite quantum state, without requiring any prior knowledge of the joint state.
We consider the post-measurement states of one subsystem, $B$, conditioned to all the outcomes of an informationally complete POVM (IC-POVM) performed on the other subsystem, $A$.  We show that if the post-measurement states of $B$ commute with one another, then the quantum discord from $B$ to $A$ is zero. Conversely, if they do not commute, the quantum discord from $B$ to $A$ is necessarily nonzero. A POVM is informationally complete if its outcome probabilities are sufficient to determine uniquely the quantum state, i.e., to perform quantum state tomography~\cite{Prug77,Busch91}.  In other words, a bipartite state has zero discord from $B$ to $A$ if tomography on $A$ leaves the eigenstates of the density operator of $B$ unchanged.

Our method for verifying quantum discord is an improvement on the existing method~\cite{modi2011}, as it only requires measurement of one IC-POVM on $A$.
Hence, this method can be readily applied to continuous-variable systems where an IC-POVM is available from either heterodyne or homodyne measurements. We discuss in Sec.~\ref{Method} that the commutativity of the conditional states of $B$ can be efficiently tested by checking the commutation relations between one nondegenerate conditional state and all other states.

Quantum discord is defined as the difference between two classically equivalent measures for mutual information~\cite{Zurek01}.  According to Bayes's rule for classically correlated probability distributions, the quantities $I(A:B)=H(A)+H(B)-H(A,B)$, $J(A|B)=H(A)-H(A|B)$, and $J(B|A)=H(B)-H(B|A)$, where $H$ denotes the Shannon entropy and $H(A|B)=H(A,B)-H(B)$ is the conditional entropy, are all equal; they are called the classical mutual information.  For a bipartite quantum system, the quantum mutual information is defined, in analogy to $I(A:B)$, by $I(\rho_{AB})=S(\rho_A)+S(\rho_B)-S(\rho_{AB})$, where $S(\rho)=-\text{Tr}[\rho\log(\rho)]$ is the von Neumann entropy.  A measurement-based, quantum version of the conditional entropy is $S_{\{\Pi_j\}}(A|B)=\sum_{j}p_{j}S(\rho_{A|j})$, where $p_j=\text{Tr}[\rho_{AB}\Pi_{j}]$, $\rho_{A|j}=\text{Tr}_B[\rho_{AB}\Pi_j]/p_j$, and the set $\{\Pi_j\}$, with $\sum_j\Pi_j=\mathbb{I}$, makes up a POVM measurement on subsystem $B$.  This conditional entropy depends on the choice of measurement; hence, the quantum analogue of $J(A|B)$ is defined by minimizing over all possible measurements: $J^{\leftarrow}(\rho_{AB})=S(\rho_{A})-\text{inf}_{\{\Pi_j\}}S_{\{\Pi_j\}}(A|B)$.

The quantum discord from $B$ to $A$ is then defined as the difference between these two ways of defining a quantum mutual information:
\begin{align}
D^{\leftarrow}(\rho_{AB})&=I(\rho_{AB})-J^{\leftarrow}(\rho_{AB}) \nonumber \\
&=S(\rho_{B})-S(\rho_{AB})+\text{inf}_{\{\Pi_j\}}S_{\{\Pi_j\}}(A|B)\;.
\end{align}
The quantum discord is zero if and only if the quantum state can be expressed in the form
\begin{equation}
\rho_{AB}=\sum_{j} p_{j} \rho_{j}\otimes\ket{j}\bra{j\,}\;,
\label{form}
\end{equation}
where $\{\ket{j}\}$ are orthogonal states and $0\leq p_{j}\leq 1$~\cite{Zurek01,MLang11}.  For a quantum state with this form, local measurements on $B$ in the basis $\{\ket{j\,}\bra{j\,}\}$ leave the system unperturbed, and all the state information can be extracted without joint measurements.
Notice that the state~\eqref{form} is diagonal in a conditional product basis pointing from $B$ to $A$, i.e., an orthogonal basis of the form $\{|f_{jk}\rangle\otimes|e_j\rangle\}$, where the states $|f_{jk}\rangle$ are the eigenstates of $\rho_j$. Hence, for a given quantum state, quantum discord can be directly verified by diagonalizing the joint density operator~\cite{MLang11}.

Although there is no general method for minimizing over all possible measurements in order to calculate the conditional entropy of a state, this can sometimes be done when there are restrictions to certain classes of states and POVMs.  Thus the Gaussian quantum discord is defined as the quantum discord for two-mode Gaussian states where the evaluation of the conditional entropy is restricted to generalized Gaussian measurements~\cite{Datta10,Paris10}.

This paper is structured as follows.  In the next section, we introduce the method for verifying quantum discord.  In Sec.~\ref{CVsystems}, we discuss the application of the method to continuous-variable systems.  In Sec.~\ref{Gaussian}, we propose a technique for verifying quantum discord of Gaussian states. Based on that, we show that for Gaussian states, only product states have zero discord; hence states with nonzero Gaussian discord have nonvanishing quantum discord.

\section{Method}
\label{Method}

A previously proposed measurement-based method~\cite{modi2011} for verifying quantum discord is based on testing whether a quantum state can be expressed in the form~\eqref{form} of states with zero discord. We improve this previous proposal by showing that the quantum discord can be verified with only one IC-POVM.

\begin{theorem}
For a bipartite system $\rho_{AB}$, the necessary and sufficient condition for having zero discord
from $B$ to $A$, $D^{\leftarrow }(\rho_{AB})=0$, is that the states of subsystem $B$, $\rho_{B|k}=\text{Tr}_{A}[M_{k}\rho_{AB}]/\text{Tr}_{AB}[M_{k}\rho_{AB}]
=\text{Tr}_{A}[M_{k}\rho_{AB}]/p_k$, conditioned to the outcomes $k$ of an IC-POVM on $A$ (POVM elements $\{M_{k}\}$), commute with one another, i.e.,
\begin{equation}
\left[\rho_{B|k},\rho_{B|k'}\right]=0\,,\quad\text{for any}\ k\ \text{and}\ k'\,.
\label{condition}
\end{equation}
\end{theorem}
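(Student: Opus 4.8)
The plan is to use the characterization that $D^{\leftarrow}(\rho_{AB})=0$ if and only if $\rho_{AB}$ has the block-diagonal form \eeqref{form}, and to establish the two implications separately. For the necessity direction I would start from $\rho_{AB}=\sum_j p_j\,\rho_j\otimes\ketbra{j}{j}$ and compute the conditional states directly: since the POVM element $M_k$ acts only on $A$, tracing out $A$ gives $\rho_{B|k}=p_k^{-1}\sum_j p_j\,\text{Tr}_A[M_k\rho_j]\,\ketbra{j}{j}$, which is diagonal in the single fixed orthonormal basis $\{\ket{j}\}$ for every outcome $k$. Operators that are simultaneously diagonal in one basis commute, so \eeqref{condition} follows at once; note that this direction makes no use of informational completeness.

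The sufficiency direction is where the work lies, and where the IC hypothesis becomes essential. Assuming all the $\rho_{B|k}$ commute, they form a commuting family of Hermitian operators and hence admit a common orthonormal eigenbasis $\{\ket{j}\}$ of $B$; I would fix one such basis, any choice being admissible when some conditional states are degenerate. Expanding the joint state in this basis as $\rho_{AB}=\sum_{j,j'}X_{jj'}\otimes\ketbra{j}{j'}$, with $X_{jj'}=(\id_A\otimes\bra{j})\rho_{AB}(\id_A\otimes\ket{j'})$ an operator on $A$, the unnormalized conditional states read $p_k\rho_{B|k}=\text{Tr}_A[M_k\rho_{AB}]=\sum_{j,j'}\text{Tr}_A[M_kX_{jj'}]\,\ketbra{j}{j'}$. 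Because each $\rho_{B|k}$ is diagonal in $\{\ket{j}\}$ its off-diagonal elements vanish, which gives $\text{Tr}_A[M_kX_{jj'}]=0$ for every outcome $k$ whenever $j\neq j'$.

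The key step is then to invoke informational completeness: the elements $\{M_k\}$ of an IC-POVM span the full operator space of $A$, so the Hilbert--Schmidt overlaps $\text{Tr}_A[M_kX]$, taken over all $k$, determine any operator $X$ uniquely. Hence $\text{Tr}_A[M_kX_{jj'}]=0$ for all $k$ forces $X_{jj'}=0$ when $j\neq j'$; concretely, expanding $X_{jj'}^\dagger=\sum_k c_k M_k$ yields $\text{Tr}_A[X_{jj'}^\dagger X_{jj'}]=\sum_k c_k\,\text{Tr}_A[M_kX_{jj'}]=0$, so the coherence $X_{jj'}$ has zero Hilbert--Schmidt norm and vanishes. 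The joint state therefore collapses to $\rho_{AB}=\sum_j X_{jj}\otimes\ketbra{j}{j}$, exactly the zero-discord form \eeqref{form} with $p_j=\text{Tr}_A[X_{jj}]$ and $\rho_j=X_{jj}/p_j$. I expect the only delicate points to be the spectral bookkeeping in choosing a genuine simultaneous eigenbasis when conditional states are degenerate, and above all the IC step itself: informational completeness is precisely what upgrades the vanishing of all measured overlaps to the vanishing of the off-diagonal operators, so that commutativity of the observed conditional states rules out any residual $B$-coherence in $\rho_{AB}$.
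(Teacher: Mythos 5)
Your proof is correct, and the necessity direction is identical to the paper's. The sufficiency direction reaches the same conclusion through the same two key ingredients (a common eigenbasis $\{\ket{j}\}$ for the commuting family $\{\rho_{B|k}\}$, plus the spanning property of an IC-POVM), but the mechanics differ. The paper works \emph{forward} from the measurement data: it invokes the dual frame $\{N_k\}$ satisfying $\rho=\sum_k N_k\,\text{Tr}[M_k\rho]$ and reconstructs $\rho_{AB}=\sum_{k,j}\lambda_{kj}p_k\,N_k\otimes\ketbra{j}{j}$, which is manifestly of the zero-discord form with an explicit expression $\rho_j=\sum_k\lambda_{kj}p_kN_k$ in terms of the observed eigenvalues. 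You instead work \emph{backward} from the state: you expand $\rho_{AB}=\sum_{j,j'}X_{jj'}\otimes\ketbra{j}{j'}$ in the common eigenbasis and use the spanning property to show that $\text{Tr}_A[M_kX_{jj'}]=0$ for all $k$ forces the coherences $X_{jj'}$ ($j\neq j'$) to vanish via the Hilbert--Schmidt norm. The two arguments are dual to one another. Yours is slightly more elementary (no need to construct the dual frame) and has the bonus of making the positivity of the diagonal blocks $X_{jj}=(\id_A\otimes\bra{j})\rho_{AB}(\id_A\otimes\ket{j})\ge0$ manifest, a point the paper leaves implicit; the paper's version buys an operational reconstruction formula for the $\rho_j$ directly from the tomographic data. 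Your remark that informational completeness is exactly what upgrades vanishing overlaps to vanishing operators is the right way to see where the hypothesis enters in both arguments.
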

\begin{proof}
For a state having the zero-discord form~\eqref{form},
\begin{equation}
\rho_{B|k}=
\frac{\sum_jp_j\text{Tr}_A[M_k\rho_j]|j\,\rangle\langle j\,|}
{\sum_jp_j\text{Tr}_A[M_k\rho_j]}\;,
\end{equation}
immediately demonstrating that the states $\rho_{B|k}$ are all diagonal in the basis $\{|j\rangle\}$ and thus commute.

For the converse, we assume the condition~\eqref{condition}.  For such a set of commuting conditional states $\{\rho_{B|k}\}$, there exists an orthonormal basis, $\{\ket{j}\}$, that diagonalizes all the conditional states, $\rho_{B|k}=\sum_{j}\lambda_{kj}\ket{j\,}\bra{j\,}$.  To say that the POVM elements $\{M_k\}$ make up an IC-POVM is to say that they span the space of operators and thus there exist operators $\{N_k\}$ such that $\rho=\sum_k N_k\text{Tr}[M_k\rho]$ for any density operator $\rho$.  Applying this identity to the joint state gives
\begin{equation}
\rho_{AB}=\sum_k N_k\text{Tr}_A[M_k\rho_{AB}]=\sum_{k,j} \lambda_{kj}p_kN_k\otimes\ket{j}\bra{j}\;.
\end{equation}
Hence, $\rho_{AB}$ has the form~\eqref{form}, with $\rho_j=\sum_k\lambda{kj}p_kN_k$.
\end{proof}

Physically, what the proof says is that for a state of zero discord from $B$ to $A$, the measurement of the POVM $\{M_k\otimes\ket{j\,}\bra{j\,}\}$ extracts all information about the state $\rho_{AB}$.  From the perspective of the original definition of discord~\cite{Zurek01}, one imagines extracting this information by first measuring $B$ in the basis $\{|j\rangle\}$ and then measuring an IC-POVM on $A$.  Our criterion for zero discord works from the opposite perspective by reversing the order of the measurements on $A$ and $B$.

In order to test whether an unknown quantum state has nonzero discord experimentally, based on this theorem, one needs to measure an IC-POVM on subsystem $A$ and determine, by state tomography for each outcome, the corresponding states of the subsystem~$B$.
This procedure continues until one of the commutation relations between conditional states of subsystem $B$ is nonzero.
If subsystem $A$ has a $d$-dimensional Hilbert space, one can always find an IC-POVM that has $d^2$ POVM elements~\cite{Prug77,Caves02}.  Hence, there are $d^2$ conditional states of subsystem~$B$ and $d^2(d^2-1)/2$ commutation relations between all pairwise states.
However, as the conditional states are Hermitian operators, the most efficient way to check commutativity is to calculate the commutation relations between one of the states with no degeneracy and all other states. In this case, there are at most $d^2-1$ commutation relations to be checked.
Also, if some prior knowledge about the state in question is available, as is often the case in practice, quantum discord can be tested by considering only a few IC-POVM elements.
 Consider, for example, the maximally entangled state, $\ket{\psi}=\sum_{j=1}^d\ket{j\,}\ket{j\,}/\sqrt d$.  Any two rank-one outcomes, $\ket{n}$ and $\ket{\eta}$, on one of the subsystems, provided $0<|\braket{n}{\eta}|<1$ (these could be outcomes from two distinct, nonorthogonal projective measurements), yield conditional states of the other subsystem that do not commute.  Also, as we show below, for Gaussian states only two different heterodyne outcomes are sufficient to verify quantum discord.

\section{Continuous-variable systems}
\label{CVsystems}

An interesting feature of this method is that it can be readily applied to continuous-variable systems, as complete sets of IC-POVMs are available from heterodyne or homodyne detection. Two sets of measurements are required, one on each of the subsystems.  In general, one needs to do state tomography to construct the quasiprobability distributions of subsystem $B$ for all the states conditioned to outcomes of the measurement performed on subsystem~$A$.  Then the commutativity of the states $\rho_{B|k}$, which are represented in terms of quasiprobabilities, must be checked in order to verify discord. This can be efficiently done by finding one nondegenerate state and calculating the commutation relations between that state and all other states using an appropriate relation in terms of the reconstructed quasiprobabilities.  For instance, if the Wigner functions  $W_{B|k}(\alpha)$ of conditioned states of subsystem $B$ are available, the commutation relations between corresponding density operators can be calculated by using the Moyal Bracket~\cite{Moyal}:
\begin{align}
W_{kk'}(\alpha)=\frac{1}{2\pi}\int\text{d}^2\!\beta\,&\text{d}^2\!\beta'\,
W_{B|k}\bigl(\alpha+\textstyle{\frac{1}{2}}\beta\bigr)
W_{B|k'}\bigl(\alpha+\textstyle{\frac{1}{2}}\beta'\bigr) \nonumber \\
&\times \sin\!\left(i\frac{\beta\beta^{\prime*}-\beta'\beta^*}{2}\right)\;.
\end{align}
Here $W_{kk'}(\alpha)$ is the Wigner-like function for the operator $-i[\rho_{B|k},\rho_{B|k'}]$. If the states commute with each other then $W_{kk'}(\alpha)=0$ for all $\alpha$. Alternatively, the commutation relations can be calculated using characteristic functions,
\begin{align}
\mathlarger{\chi}_{kk'}(\xi)
=\frac{2}{\pi}\int\text{d}^2\zeta\,&
\mathlarger{\chi}_{B|k}\bigl({\textstyle\frac{1}{2}}\xi+\zeta\bigr)
\mathlarger{\chi}_{B|k'}\bigl({\textstyle\frac{1}{2}}\xi-\zeta\bigr)\nonumber \\ &\times\sin\!\left(i\frac{\xi\zeta^*-\xi^*\zeta}{2}\right)\;,
\end{align}
or in terms of any other quasiprobability distributions~\cite{Agarwal-Wolf}.

For states with zero discord, the eigenstates of the conditional density operator of $B$ do not change while $A$ is being fully determined from measurements of an IC-POVM. For continuous-variable systems, defined on an infinite-dimensional Hilbert space, the IC-POVM will have an infinite number of outcomes.  In practice, only a finite number of measurement outcomes can be explored.  For instance, in homodyne detection only a finite number of phases are considered, and the phase space is subdivided into a finite number of bins.  This introduces errors in the state estimation and uncertainties for the reconstructed quasiprobabilities of the conditional states, which propagate to the distributions representing the operators $-i[\rho_{B|k},\rho_{B|k'}]$. If one of these commutator distributions takes on a nonzero value at some point, which is larger than its associated uncertainty, then quantum discord is necessarily nonzero; otherwise, it is not clear whether the discord is nonzero.
However, by having some prior knowledge about the state, such as being Gaussian, the error can be estimated, and it can be made arbitrarily small using a sufficiently large number of measurements.

\section{Gaussian states}
\label{Gaussian}

A special class of continuous-variable states consists of the Gaussian states, i.e., those states whose Wigner function is a Gaussian function.  Such states are uniquely characterized by the means and covariance matrix of their quadrature components, $x$ and $p$.  For two systems, with modal annihilation operators $\hat a=x_1+ip_1$ and $\hat b=x_2+ip_2$, we define quadrature vectors for each system, $\mathbf{x}_1=(x_1,p_1)$ and $\mathbf{x}_2=(x_2,p_2)$, and we define an overall quadrature vector $\mathbf{x}=(\mathbf{x}_1,\mathbf{x_2})=(x_1,p_1,x_2,p_2)$.

The means of the quadrature components can be set to zero by locally displacing the two systems.  Then the state is specified by its covariance matrix~\cite{Adesso-Illuminati}
\begin{equation}
\bm{\sigma}=\langle\mathbf{x}^T\mathbf{x}\rangle=
\begin{pmatrix}
\mathbf{A} & \mathbf{C} \\
\mathbf{C}^{T} & \mathbf{B}
\end{pmatrix}
\;.
\label{sigmagen}
\end{equation}
Using local unitary operations that preserve the Gaussian form of the states, the covariance matrix of a bipartite Gaussian state can be brought to a standard form in which
$\mathbf{A}=\text{diag}(a,a)$, $\mathbf{B}=\text{diag}(b,b)$, and $\mathbf{C}=\text{diag}(c,d)$, where $a\ge0$ and $b\ge0$.  This can be accomplishing by first applying local unitary rotations that diagonalize $\mathbf{A}$ and $\mathbf{B}$, then using local squeezing operations to transform these diagonal blocks to $\mathbf{A}=\text{diag}(a,a)$ and $\mathbf{B}=\text{diag}(b,b)$, and finally applying further local unitary rotations to diagonalize $\mathbf{C}$. Notice that positivity of the density operator imposes the uncertainty-principle constraint~\cite{RSimon00},
\begin{align}
&\bm{\sigma}+\frac{i}{4}\bm{\Omega}\ge0\;,\quad
\Omega=\begin{pmatrix}
\mathbf{J} & \mathbf{0} \\
\mathbf{0} & \mathbf{J}
\end{pmatrix}\;,\quad
&\mathbf{J}=\begin{pmatrix}
0&1\\-1&0
\end{pmatrix}\;.
\end{align}
For a covariance matrix in standard form, this implies that $a^2\ge1/16$, $b^2\ge1/16$, $ab\ge c^2$, and $ab\ge d^2$, plus cubic and quartic constraints on $a$, $b$, $c$, and $d$.

It has been shown that Gaussian discord for a two-mode Gaussian state is zero if and only if $\mathbf{C}=0$~\cite{Datta10,Paris10}.  Here we show this condition is also necessary and sufficient for having zero discord.

A zero-mean Gaussian state with the standard form of the covariance matrix has characteristic function $\text{\large${\chi}$}(\mathbf{k})=\langle e^{i\mathbf{k}\mathbf{x}^T}\rangle=e^{-\mathbf{x}\bm{\sigma}\mathbf{x}^T/2}$
and Wigner function
\begin{align}
&W(\mathbf{x}_1,\mathbf{x}_2)=W(\mathbf{x})\nonumber\\
&=\frac{1}{4\pi^2\sqrt{\det\bm{\sigma}}}
\,\exp\!\left(-\frac{\mathbf{x}\bm{\sigma}^{-1}\mathbf{x}}{2}\right)\nonumber\\
&=\frac{1}{4\pi^2\sqrt{(ab-c^2)(ab-d^2)}}\nonumber\\
&\times\exp\!\!\left(-\frac{bx_1^2+ax_2^2-2cx_1x_2}{2(ab-c^2)}-\frac{bp_1^2+ap_2^2-2dp_1p_2}{2(ab-d^2)}\right)\;.
\end{align}

Suppose Alice makes a heterodyne measurement on subsystem~$A$; i.e., she uses the IC-POVM whose POVM elements are the coherent states $\ket{\beta}\!\bra{\beta}$.  Let $\beta=x'_1+ip'_1$ specify the outcomes of here measurement.  Then the state $\rho_{B|\mathbf{x}'_1}$, conditioned on these outcomes, has Wigner function
\begin{align}
W_{B|\mathbf{x'_1}}(\mathbf{x}_2)=\frac{1}{N}\int\text{d}^2\mathbf{x}_1\,
W(\mathbf{x}_1,\mathbf{x}_2)W_{\mathbf{x}'_1}(\mathbf{x}_1)\;,
\end{align}
where $W_{\mathbf{x}'_1}(\mathbf{x}_1)=2\exp[-2(x_1-x'_1)^2-2(p_1-p'_1)^2]/\pi$ is the Wigner function of coherent state $\ket{\beta}=\ket{x'_1+ip'_1}$ and $N$ is a normalization factor. Integration yields
\begin{align}
W_{B|\mathbf{x}'_1}(\mathbf{x}_2)=&\frac{1}{N'} \exp\!\left(-\frac{1}{2}\,x_2^2f(a,b,c)+x_2x'_1g(a,b,c)\right)\nonumber\\
&\times \exp\!\left(-\frac{1}{2}\,p_2^2f(a,b,d)+p_2p'_1g(a,b,d)\right)\;,
\label{WB}
\end{align}
where
\begin{align}
f(a,b,z)&=\frac{1}{ab-z^2}\left(a-\frac{z^2}{b+4(ab-z^2)}\right)\;, \nonumber \\
g(a,b,z)&=\frac{4z}{b+4(ab-z^2)}\;, \nonumber
\end{align}
and $N'$ is a normalization factor.

The peak of system~$B$'s conditional Wigner function~\eqref{WB}, located at
\begin{equation}
\gamma=\frac{g(a,b,c)}{f(a,b,c)}x'_1+i\frac{g(a,b,d)}{f(a,b,d)}p'_1\;,
\label{peak}
\end{equation}
depends on the outcomes of the measurement on $A$, $x'_1$ and $p'_1$, unless $c=0$ and $d=0$. Consequently, the eigenvectors of the conditional state $\rho_{B|\mathbf{x}'_1}$, which are generally displaced, squeezed number states, $\{D(\gamma)S(\zeta)\ket{n}\}$, displaced to the Wigner-function peak $\gamma$, change based on the outcomes of the heterodyne measurement performed on subsystem~$A$. This indicates nonzero discord, since the eigenvectors do not commute.  Therefore, without explicitly calculating any commutation relations, we can see that the bipartite Gaussian state has nonvanishing discord unless $c=0$ and $d=0$. Transforming back from the standard form to the general convariance matrix~\eqref{sigmagen}, one can say that a bipartite Gaussian state has zero discord if and only if $\mathbf{C}=0$, i.e., if and only if the state is a product state. This also implies that states with Gaussian discord ($\mathbf{C}\neq0$) have nonzero quantum discord.

These results show that quantum discord of Gaussian states can be verified using only two different heterodyne outcomes on one subsystem and finding (by tomography) the points in the phase space at which the corresponding conditional Wigner functions attain their maximum values. If those points do not coincide, the quantum discord is nonzero, since having different peaks guarantees that the corresponding eigenstates, which are displaced, squeezed number states, do not commute.
This argument can also be applied to nonGaussian states: if there are two conditional Wigner functions with the same shape, but located at different points in phase space, they correspond to states $\rho$ and $D(\nu)\rho D^{\dagger}(\nu)$, which have two different sets of eigenvectors $\{\ket{\psi_i}\}$ and $\{D(\nu)\ket{\psi_i}\}$, which is sufficient evidence that the quantum discord is nonzero.  Note that discord exists even if only one of $c$ and $d$ is nonzero, so to uncover discord of Gaussian states with only two heterodyne outcomes, one should choose the outcomes to be different for both quadratures.

\section{Conclusion}
We have introduced a method for verifying quantum discord of any bipartite quantum system.  The method is based on the fact that all the information in states with zero discord from subsystem $B$ to subsystem $A$ can be fully extracted by measurements that are diagonal in a single basis of $B$. In order to verify discord, one needs to perform an IC-POVM on subsystem~$A$ and check whether the conditional density operators of the subsystem~$B$ commute, i.e., whether they share the same eigenstates.

It is worth mentioning that, in practice, one would check commutativity of the conditional states as an IC-POVM is being performed on $A$. This can be efficiently done by finding a nondegenerate conditional state and calculate the commutation relations between that state and other states. In this case, the maximum number of commutation relations to be checked scales linearly with the number of IC-POVM elements.  Once one of these commutators is found to be nonzero, that confirms nonzero discord.  This method can be simply applied on continuous-variable systems by using homodyne or heterodyne detection and calculating the commutation relations in terms of quasiprobability distributions.  We have shown that a bipartite Gaussian state has nonzero quantum discord if and only if it is not a product state, which is the same as the condition for having nonzero Gaussian discord. Moreover, we show that with only two heterodyne outcomes and without calculating any commutation relations, quantum discord of Gaussian states can be verified.

\vspace{18pt}

\acknowledgments

We thank M.~Gu and N.~Walk for useful discussions.  This research was conducted by the Australian Research Council Centre of Excellence for Quantum Computation and Communication
Technology (Project number CE11000102) and was partially supported by U.S.\ National Science Foundation Grant Nos.~PHY-1005540, PHY-0903953, and PHY-1212445.


\end{document}